\newcommand{\Rl}{\mathbb{R}}
\newcommand{\mbR}{\mathbb{R}}
\newcommand{\Lc}{\mathcal{L}}
\newcommand{\Tr}{\mathrm{Tr}}
\newcommand{\nexp}[1]{\exp(#1)}
\def\XXint#1#2#3{{\setbox0=\hbox{$#1{#2#3}{\int}$ }
\vcenter{\hbox{$#2#3$ }}\kern-.6\wd0}}
\numberwithin{equation}{section}
\newtheorem{theorem}{Theorem}[section]
\newtheorem{lemma}[theorem]{Lemma}
\theoremstyle{remark}
\title{The density of states depends on the domain}
\author{N.\,Azamov, E.\,McDonald, D.\,Zanin and F.\,Sukochev}
\date{\today}
\begin{document}

\begin{abstract} In this short note we demonstrate that the definition of the density of states of a Schr\"odinger operator with bounded potential in general depends on the choice of the domain undergoing the thermodynamic limit. 
% The proof is based on the idea that while the number of quantum states in a domain of the phase-space mostly depends on the volume of the domain, it also depends -- albeit to a lesser extent -- on the shape of the domain. Usually, for such potentials as periodic and ergodic, this shape-dependence is mitigated and eventually nullified by the thermodynamic limit, however the use of radially homogeneous potentials allows to circumvent the effect of the thermodynamic limit. 
\end{abstract}
\maketitle{}

\section{Introduction}
% The density of states (DOS) is of utmost importance in condensed matter physics. 
% The Laplace transform of DOS is the partition function $Z(\beta),$ which encompasses the bulk of necessary statistical-mechanical information and often serves as the initial point of theoretical investigation of a statistical mechanical system, see e.g. \cite{Ruelle}. As such, knowledge of DOS allows to calculate the thermodynamic quantities associated with the system.

The density of states of a Schr\"odinger operator is defined as the averaged number of states per unit of volume and is a quantity of great importance in condensed matter physics. The Laplace transform of the density of states is the partition function which often serves as the initial point of theoretical investigation of a statistical mechanical system, see e.g. \cite{Ruelle}. Its definition involves restriction of the system to a sequence of finite volume domains $\{\Omega_n\}_{n=0}^\infty$ such that $\Omega_n\to \Rl^d$ and taking the thermodynamic limit $n\to\infty.$ This process involves ambiguities in the choice of boundary conditions and of the shape of the domains $\{\Omega_n\}_{n=0}^\infty.$ It is well-known that the sequence $\{\Omega_n\}_{n=0}^\infty$ must at least satisfy some F\o{}lner condition for the limit to be meaningful.
% Independence from boundary conditions was addressed in the literature, see e.g. \cite{DIT2001} and references therein. A longstanding general belief is that the definition of DOS also does not depend on the choice of the domain~$\Omega$, provided it obeys a rather weak F\o{}lner condition. The aim of this short note is to settle this belief -- in negative -- by presenting a class of counter-examples. 
%  This definition admits a variety of forms, such as the choice of boundary conditions for the Hamiltonian restricted to a domain $\Omega,$ and the choice of the domain itself. It is well-known that for large classes of physically important Schr\"odinger operators, such as Schr\"odinger operators with periodic and random potentials, DOS exists and is independent of the choice of the boundary conditions and of the choice of the domain~$\Omega$ subject a mild condition known as the F\o{}lner condition. For a long time it has been believed by the experts that the definition of DOS is independent of the choice of the domain $\Omega$ provided DOS exists. The aim of this short note is to provide a negative answer to this belief. Namely, we show existence of a class of Schr\"odinger operators with bounded potentials, for which DOS exists but depends on the choice of the domain $\Omega.$

It has been known for some time that in general the density of states is independent of the boundary conditions \cite{DIT2001}, \cite[Theorem C.7.4]{SimonSemigroups}. 
Most of the literature is devoted to potentials which are periodic or almost periodic, or random potentials satisfying some ergodicity condition. In these cases it is usually the case that the choice of F\o{}lner sequence $\{\Omega_n\}_{n=0}^\infty$ is irrelevant \cite{AiWa,Beresin_Shubin,CarmonaLacroix}. The result of this note is that in general the choice of approximating F\o{}lner sequence does matter.

We prove that there exist potentials $V\in L_{\infty}(\Rl^d)$
such that the density of states defined with the sequence of balls $\{B(0,n)\}_{n=0}^\infty$ differs from that defined by the sequence of boxes $\{[-n,n]^d\}_{n=0}^\infty$. 
The potentials which we use for this purpose are radially homogeneous. That is, the ones which obey the condition $V(tx) = V(x),$ $x \in \mbR^d,  t > 0.$ Besides our paper \cite{AMSZ}, a handful of articles concerning Schr\"odinger operators with radially homogeneous potentials exist \cite{Herb,HSk1,HSk2}. The idea of the proof is that the thermodynamic limit $\Omega_n \to \mbR^d$ for radially homogeneous potentials corresponds to the semi-classical limit $\hbar \to 0.$ 
% While the asymptotics of the number of states is governed by Weyl's law $c \hbar^{-n} + o(\hbar^{-n}),$ as it turns out for radially homogeneous potentials the coefficient $c$ may depend on the choice of the domain $\Omega.$

%\ldots using a formula for the DOS of Schr\"odinger operators with radially homogeneous potentials which in its turn was derived from the Dixmier trace formula for DOS \cite{}.

\section{Proofs}
For a bounded open subset $0 \in \Omega$ of $\Rl^d$, we denote by $\Delta_{\Omega}$ the Laplace operator in~$\Omega$ with Dirichlet boundary conditions, and by $|\Omega|$ we denote the Lebesgue measure of $\Omega$. Let $L_2(\Omega)$ denote
the Hilbert space of almost-everywhere equivalence classes of square integrable functions on $\Omega$ with the Lebesgue measure. Denote by $\Tr_{L_2(\Omega)}$ the operator trace on the ideal $\Lc_1(L_2(\Omega))$ of trace class
operators on $L_2(\Omega).$
Let~$V$ be a bounded measurable real-valued function on $\Rl^d$. We denote by $M_V$ the operator on $L_2(\Rl^d)$ of pointwise multiplication by $V$.
It is known that if for all $t > 0$ there exists the limit
\begin{equation} \label{F: thermo lim = DOS}
     \lim_{R\to\infty} \frac{1}{|R\Omega|}\Tr_{L_2(R\Omega)}(\nexp{-t(-\Delta_{R\Omega}+M_V)})
\end{equation}
then there exists a unique measure $\nu_{V,\Omega}$ on $\Rl$ such that
\begin{equation*}
    \lim_{R\to\infty} \frac{1}{|R\Omega|}\Tr_{L_2(R\Omega)}(\nexp{-t(-\Delta_{R\Omega}+M_V)}) = \int_{\Rl} \nexp{-t\lambda} \,d\nu_{V,\Omega}(\lambda),\quad t > 0.
\end{equation*}
See \cite[Proposition C.7.2]{SimonSemigroups}. The measure $\nu_{V,\Omega}$ is called the \emph{density of states}.

\begin{theorem}\label{main_result}
    Let $V \in L_\infty(\Rl^d)$ be a radially homogeneous potential, that is, for all $t > 0$ and for all $x\in \Rl^d$ we have $V(tx) = V(x)$. Then
        for every bounded open set $\Omega$ containing zero and with piecewise smooth boundary, the density of states $\nu_{V,\Omega}$ exists and is given by the formula
        \begin{equation*}
             \int_{\Rl} \nexp{-t\lambda} \,d\nu_{V,\Omega}(\lambda) = (4\pi t)^{-\frac{d}{2}}\frac{1}{|\Omega|}\int_{\Omega} \nexp{-tV(x)}\,dx,\quad t > 0.
        \end{equation*}
\end{theorem}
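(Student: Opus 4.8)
The plan is to exploit the exact scaling symmetry created by radial homogeneity in order to turn the thermodynamic limit $R\to\infty$ into a semiclassical limit $\hbar\to 0$, and then to invoke the leading-order semiclassical (Weyl) heat-trace asymptotics.

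First I would introduce the dilation unitary $U_R\colon L_2(R\Omega)\to L_2(\Omega)$, $(U_Rf)(y)=R^{d/2}f(Ry)$. A direct computation at the level of quadratic forms (the dilation carries $H_0^1(R\Omega)$ onto $H_0^1(\Omega)$, so Dirichlet boundary conditions are preserved) gives $U_R(-\Delta_{R\Omega})U_R^{-1}=R^{-2}(-\Delta_\Omega)$. The crucial point is the behaviour of the potential: because $V(Ry)=V(y)$ for all $R>0$, one has $U_R M_V U_R^{-1}=M_V$ on $L_2(\Omega)$. Writing $\hbar=1/R$, we conclude that $-\Delta_{R\Omega}+M_V$ is unitarily equivalent to the semiclassical Schr\"odinger operator $H_\hbar:=-\hbar^2\Delta_\Omega+M_V$ on the \emph{fixed} domain $\Omega$. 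Since $U_R$ is unitary and $|R\Omega|=R^d|\Omega|=\hbar^{-d}|\Omega|$, the quantity \eqref{F: thermo lim = DOS} equals $\tfrac{\hbar^d}{|\Omega|}\Tr_{L_2(\Omega)}(\nexp{-tH_\hbar})$, and the theorem reduces to proving that, for every $t>0$,
\begin{equation*}
\hbar^d\,\Tr_{L_2(\Omega)}\!\big(\nexp{-tH_\hbar}\big)\xrightarrow[\hbar\to0]{}(4\pi t)^{-\frac d2}\int_\Omega \nexp{-tV(x)}\,dx.
\end{equation*}
Once this limit is shown to exist for all $t>0$, both the existence of $\nu_{V,\Omega}$ and the asserted Laplace-transform identity follow from the cited Proposition C.7.2.

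To evaluate the limit I would write the trace as $\int_\Omega K_\hbar(t,x,x)\,dx$, where $K_\hbar$ is the diagonal of the Dirichlet heat kernel of $H_\hbar$, and compare it with the free kernel $K_\hbar^{\mathrm{free}}$ on all of $\Rl^d$. By the Feynman--Kac formula the free diagonal equals $(4\pi\hbar^2 t)^{-d/2}$ times the expectation of $\nexp{-\int_0^t V(\omega(s))\,ds}$ over the Brownian bridge from $x$ to $x$ of length $t$ and variance of order $\hbar^2$. As $\hbar\to0$ this bridge concentrates at $x$, so at every Lebesgue point of $V$ the exponent tends to $tV(x)$; multiplying by $\hbar^d$ cancels the $(4\pi\hbar^2t)^{-d/2}$ into $(4\pi t)^{-d/2}$, and boundedness of $V$ together with dominated convergence yields the pointwise limit $(4\pi t)^{-d/2}\nexp{-tV(x)}$ and hence the claimed right-hand side after integrating over $\Omega$. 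This is precisely the semiclassical Weyl law $\hbar^d\Tr(\nexp{-tH_\hbar})\to(2\pi)^{-d}\iint_{\Omega\times\Rl^d}\nexp{-t(|\xi|^2+V(x))}\,d\xi\,dx$, the momentum Gaussian integral $\int_{\Rl^d}\nexp{-t|\xi|^2}\,d\xi=(\pi/t)^{d/2}$ and the identity $(2\pi)^{-d}\pi^{d/2}=(4\pi)^{-d/2}$ producing exactly the constant $(4\pi t)^{-d/2}$.

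It remains to control the effect of the boundary, and this is where I expect the main difficulty to lie. The difference $K_\hbar^{\mathrm{free}}-K_\hbar$ is governed by Brownian paths that reach $\partial\Omega$ in time $t$; these are confined to a layer of width of order $\hbar$ about the boundary, so their contribution to the trace is of order $\hbar^{1-d}$ (the surface measure of $\partial\Omega$ times the on-diagonal density $\hbar^{-d}$ times the layer width $\hbar$), which is one power of $\hbar$ smaller than the bulk term after multiplication by $\hbar^d$ and hence vanishes in the limit. Making this rigorous is the technical heart of the argument: one must justify the boundary-layer estimate for a domain with only \emph{piecewise} smooth boundary (so that $\partial\Omega$ has finite surface measure and the corners contribute negligibly), and one must accommodate the fact that a radially homogeneous $V$ is in general merely bounded and discontinuous along rays and at the origin. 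Both points are handled by replacing pointwise continuity with the almost-everywhere Lebesgue-point property of $V\in L_\infty(\Rl^d)$, and by using the Dirichlet domination $0\le K_\hbar\le K_\hbar^{\mathrm{free}}$ (valid because the integrand $\nexp{-\int_0^t V}$ is nonnegative) to produce the uniform $L_1(\Omega)$ bound $\hbar^dK_\hbar^{\mathrm{free}}(t,x,x)\le (4\pi t)^{-d/2}\nexp{t\|V\|_\infty}$ needed for dominated convergence.
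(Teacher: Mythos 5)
Your proposal is correct and follows essentially the same route as the paper: you construct the same dilation unitary $U_R$, use radial homogeneity to get $U_R M_V U_R^* = M_V$ and $U_R\Delta_{R\Omega}U_R^* = R^{-2}\Delta_\Omega$, set $\hbar = 1/R$, and reduce the thermodynamic limit to the semiclassical Weyl law on the fixed domain $\Omega$. The only difference is that the paper invokes that Weyl law as a known result (Theorem \ref{semiclassical_weyl_law}, cited from Simon's works for bounded measurable potentials), whereas you additionally sketch its Feynman--Kac proof; that extra material is a reasonable outline of the standard argument but is not needed once the cited theorem is taken as given.
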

The above formula for the Laplace transform of $\nu_{V,\Omega}$ yields the following for the integrated density of states function $\lambda\mapsto\nu_{V,\Omega}(-\infty,\lambda],$
\begin{equation*}
    \nu_{V,\Omega}(-\infty,\lambda] = \frac{1}{|\partial \Omega|}\int_{\partial \Omega} \nu_{V(\sigma),\Omega}(-\infty,\lambda]\,d\sigma.
\end{equation*}
Here, $|\partial \Omega|$ is the $(d-1)$-Hausdorff measure of $\partial \Omega$ and $d\sigma$ is the corresponding measure, and $V(\sigma)$ is the value of $V$ at the point $\sigma\in \partial \Omega.$
This is a generalisation of the formula in \cite[Theorem 2.1]{AMSZ} to arbitrary domains, but Theorem \ref{main_result} is stronger in that it also proves existence of the density of states.

We give a proof of Theorem \ref{main_result} based on the following well-known semiclassical Weyl law, see e.g. \cite[Theorem C.6.2]{SimonSemigroups},
\cite[Theorem 10.1]{simon_functional}. For the special case of smooth $V$, see \cite[Chapter 6]{Zw}.
%semiclassical Weyl law, which is probably well-known.
\begin{theorem}\label{semiclassical_weyl_law}
    Let $\Omega\subset \Rl^d$ be a bounded open set with piecewise smooth boundary, and let $V \in L_{\infty}(\Omega)$. Then
    \begin{equation*}
        \lim_{\hbar\to 0} \hbar^d\Tr_{L_2(\Omega)}(\nexp{-t(-\hbar^2\Delta_{\Omega}+M_V)}) = (4\pi t)^{-\frac{d}{2}}\int_{\Omega} \nexp{-tV(x)}\,dx.
    \end{equation*}
\end{theorem}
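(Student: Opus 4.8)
The plan is to prove the asymptotics through the Feynman--Kac representation of the heat semigroup, which keeps the full potential visible and is therefore well adapted to $V\in L_\infty$. Write $B^\hbar$ for the Brownian motion on $\Rl^d$ generated by $\hbar^2\Delta$, so that $e^{t\hbar^2\Delta_\Omega}$ is its transition semigroup with killing on exiting $\Omega$; its free on-diagonal kernel is $(4\pi\hbar^2 t)^{-d/2}$. Under Dirichlet conditions the diagonal kernel of $\nexp{-t(-\hbar^2\Delta_\Omega+M_V)}$ factorises as
\[
K_t^\hbar(x,x)=(4\pi\hbar^2 t)^{-\frac{d}{2}}\,\mathbb{E}^{x\to x}_\hbar\Big[e^{-\int_0^t V(\omega_s)\,ds}\,\mathbf{1}_{\{\omega_s\in\Omega,\ 0\le s\le t\}}\Big],
\]
where $\mathbb{E}^{x\to x}_\hbar$ denotes expectation over the bridge of $B^\hbar$ from $x$ to $x$ on $[0,t]$. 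Integrating over the diagonal and multiplying by $\hbar^d$ cancels the $\hbar$-dependence of the prefactor, giving
\[
\hbar^d\,\Tr_{L_2(\Omega)}\big(\nexp{-t(-\hbar^2\Delta_\Omega+M_V)}\big)=(4\pi t)^{-\frac{d}{2}}\int_\Omega \mathbb{E}^{x\to x}_\hbar\Big[e^{-\int_0^t V(\omega_s)\,ds}\,\mathbf{1}_{\{\omega\subset\Omega\}}\Big]\,dx.
\]
It then suffices to pass to the limit $\hbar\to0$ inside the integral.

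The key point is that the bridge concentrates: writing $\omega_s=x+\hbar\beta_s$ with $\beta$ a standard bridge on $[0,t]$ independent of $x$ and $\hbar$, we have $\sup_{0\le s\le t}|\omega_s-x|\to0$ almost surely. Two consequences follow for a.e. fixed $x\in\Omega$. First, since $\Omega$ is open, $x$ admits a ball $B(x,r)\subset\Omega$, and the event $\{\sup_s\hbar|\beta_s|<r\}$ has probability tending to $1$, so the survival indicator tends to $1$ in probability. Second, at every Lebesgue point $x$ of $V$ the time average converges: for fixed $s\in(0,t)$ the law of $\omega_s$ is a Gaussian approximate identity of width $\sim\hbar$ centred at $x$, whence $\mathbb{E}\,|V(\omega_s)-V(x)|\to0$, and integrating in $s$ (against the bound $\|V\|_\infty$) gives $\int_0^t V(\omega_s)\,ds\to tV(x)$ in bridge $L_1$. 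Combining the two with the bound $e^{t\|V\|_\infty}$ yields, for a.e. $x\in\Omega$,
\[
\mathbb{E}^{x\to x}_\hbar\Big[e^{-\int_0^t V(\omega_s)\,ds}\,\mathbf{1}_{\{\omega\subset\Omega\}}\Big]\longrightarrow e^{-tV(x)}.
\]
Since the integrand is uniformly bounded by $e^{t\|V\|_\infty}$ and $|\Omega|<\infty$, dominated convergence transfers this pointwise limit to the integral and produces exactly $(4\pi t)^{-\frac{d}{2}}\int_\Omega e^{-tV(x)}\,dx$.

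I expect the main obstacle to be the second step in the generality $V\in L_\infty$: for continuous $V$ the convergence $\int_0^t V(\omega_s)\,ds\to tV(x)$ is immediate, but for merely bounded measurable $V$ it must be justified almost everywhere via Lebesgue points and the fact that the time-$s$ marginal of the rescaled bridge is a genuine approximate identity, and one must also check that the exceptional null set and the boundary layer (where the survival probability stays below $1$) are Lebesgue-negligible, which is where piecewise smoothness of $\partial\Omega$ enters. A purely spectral alternative is Dirichlet--Neumann bracketing, partitioning $\Omega$ into small cubes, evaluating the free cube trace by the Jacobi theta asymptotics $\sum_{n\ge1}e^{-t\hbar^2\pi^2 n^2/\ell^2}\sim \ell(4\pi t\hbar^2)^{-1/2}$, and sandwiching; but replacing $V$ by its cubewise supremum and infimum is too lossy for a general $L_\infty$ potential, so the probabilistic route above is the cleaner one.
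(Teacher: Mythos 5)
Your proposal is correct, but note that the paper itself offers no proof of this statement: Theorem \ref{semiclassical_weyl_law} is quoted as a known semiclassical Weyl law, with references to \cite[Theorem C.6.2]{SimonSemigroups} and \cite[Theorem 10.1]{simon_functional}. Your Feynman--Kac argument is essentially the proof found in those cited sources: factor the Dirichlet heat kernel through a Brownian bridge expectation, observe that multiplying by $\hbar^d$ cancels the free prefactor $(4\pi\hbar^2 t)^{-d/2}$ up to $(4\pi t)^{-d/2}$, and pass to the limit pointwise in $x$ using the concentration of the rescaled bridge, handling $V\in L_\infty$ via Lebesgue points exactly as you indicate (the time-$s$ marginal of the bridge is a Gaussian of width $c(s)\hbar$ with $c(s)>0$ for $s\in(0,t)$, and the degenerate endpoints form a null set in $s$, so the $L_1$ convergence of $\int_0^t V(\omega_s)\,ds$ to $tV(x)$ goes through). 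One small correction to your closing caveats: no ``boundary layer'' issue arises and piecewise smoothness of $\partial\Omega$ is not actually used in your argument. For every fixed $x\in\Omega$ the survival probability tends to $1$ simply because $x$ is interior, and dominated convergence over $\Omega$ needs only this pointwise a.e.\ convergence together with the uniform bound $e^{t\|V\|_\infty}$; thus your route proves the theorem for an arbitrary bounded open $\Omega$, which is in fact slightly more general than the stated hypothesis (the smoothness assumption in the paper is inherited from the framework of the citations, not from necessity here). The only detail worth making explicit is the identity $\Tr_{L_2(\Omega)}(e^{-tH})=\int_\Omega K_t(x,x)\,dx$, which is standard via writing $e^{-tH}$ as the square of the Hilbert--Schmidt operator $e^{-tH/2}$; your judgment that Dirichlet--Neumann bracketing with cubewise sup/inf is too lossy for general $L_\infty$ potentials is also sound.
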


The following lemma is a key observation.
\begin{lemma}\label{rescaling_lemma}
    Let $R > 0$, and let $0 \in \Omega\subset \Rl^d$ be an open set with piecewise smooth boundary. If $V \in L_{\infty}(\Rl^d)$ is homogeneous, then
    \begin{equation*}
        \Tr_{L_2(R\Omega)}(\nexp{-t(-\Delta_{R\Omega}+M_{V})}) = \Tr_{L_2(\Omega)}(\nexp{-t(-R^{-2}\Delta_{\Omega}+M_V)}).
    \end{equation*}
\end{lemma}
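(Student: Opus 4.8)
The plan is to realise the two operators as unitarily equivalent via the natural dilation carrying $R\Omega$ onto $\Omega$. Define $U_R\colon L_2(R\Omega)\to L_2(\Omega)$ by $(U_R f)(x) = R^{d/2} f(Rx)$ for $x\in\Omega$; the substitution $y = Rx$ shows $U_R$ is unitary with adjoint $(U_R^* g)(y) = R^{-d/2} g(R^{-1}y)$ for $y\in R\Omega$. Since unitary conjugation preserves the trace and commutes with the functional calculus, it suffices to establish the operator identity
\[
    U_R(-\Delta_{R\Omega} + M_V)U_R^* = -R^{-2}\Delta_\Omega + M_V
\]
on $L_2(\Omega)$. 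Applying $\nexp{-t(\cdot)}$ to both sides and using $\Tr_{L_2(\Omega)}(U_R A U_R^*) = \Tr_{L_2(R\Omega)}(A)$ then yields the asserted equality of traces, and in particular both sides are finite simultaneously.

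I would verify the identity one summand at a time. For the potential, a direct computation gives $(U_R M_V U_R^* g)(x) = V(Rx)\,g(x)$; this is the one place the hypothesis is used, since homogeneity gives $V(Rx) = V(x)$ for $R>0$, whence $U_R M_V U_R^* = M_V$. For the Laplacian, differentiating $(U_R^* g)(y) = R^{-d/2} g(R^{-1}y)$ produces a factor $R^{-1}$ from the chain rule in each derivative, so that the second-order operator picks up an overall $R^{-2}$; unwinding the conjugation on smooth functions gives $U_R(-\Delta_{R\Omega})U_R^* = -R^{-2}\Delta_\Omega$.

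The main point requiring care is that this formal computation is an identity of self-adjoint operators with the correct Dirichlet domains, i.e.\ that $U_R$ genuinely intertwines the Dirichlet realisations rather than merely acting correctly on an ad hoc class of functions. I would argue this at the level of quadratic forms: $-\Delta_{R\Omega}$ is the operator associated with $u\mapsto \int_{R\Omega}|\nabla u|^2$ on $H_0^1(R\Omega)$, and the change of variables above shows that $U_R^*$ sends this form, evaluated at $U_R^* g$, to $R^{-2}\int_\Omega |\nabla g|^2$. Since $U_R$ maps $C_c^\infty(R\Omega)$ onto $C_c^\infty(\Omega)$ (dilation rescales supports), which is a form core on each side, the gradient-form identity extends by density to the full form domain, and the Dirichlet boundary condition is preserved because the dilation maps $H_0^1(R\Omega)$ boundedly and bijectively onto $H_0^1(\Omega)$. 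Combining this with $U_R M_V U_R^* = M_V$ shows $U_R$ intertwines the two form sums, hence the associated self-adjoint operators, giving the displayed operator identity on its full domain; taking $\nexp{-t(\cdot)}$ and traces completes the proof.
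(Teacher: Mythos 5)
Your proof is correct and follows essentially the same route as the paper: the unitary dilation $U_R$ with $(U_Ru)(x)=R^{d/2}u(Rx)$, the identities $U_RM_VU_R^*=M_V$ (via homogeneity) and $U_R\Delta_{R\Omega}U_R^*=R^{-2}\Delta_\Omega$, and invariance of the trace under unitary conjugation. Your additional verification via quadratic forms that $U_R$ genuinely intertwines the Dirichlet realisations is a welcome level of care that the paper simply asserts, but it does not constitute a different approach.
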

\begin{proof}
    Let $U_R$ denote the unitary mapping
%    \begin{equation*}
        $U_R:L_2(R\Omega)\to L_2(\Omega)$
%    \end{equation*}
    given by
    \begin{equation*}
        (U_Ru)(x) = R^{\frac{d}{2}}u(Rx),\quad x \in \Omega.
    \end{equation*}
    For any operator $T \in \Lc_1(L_2(R\Omega))$, we have
    \begin{equation}\label{unitary_trace_conjugation}
      \Tr_{L_2(R\Omega)}(T) = \Tr_{L_2(\Omega)}(U_RTU_R^*). 
    \end{equation}
    Since  $ U_RM_VU_R^* u(x) = V(Rx)u(x)$ and since $V$ is radially homogeneous
     we have
%   \begin{equation*}
      $ M_V = U_RM_VU_R^*$
%   \end{equation*}
    (on the left hand side, $M_V$ is understood as acting on $L_2(\Omega)$ and on the right it acts on $L_2(R\Omega)$).
%    The most important thing is that
Combining this with 
%    \begin{equation*}
      $  U_R\Delta_{R\Omega}U_R^* %= U_R(\sum_{j=1}^d \partial^2_{x_j})U_R^* = \sum_{j=1}^d \partial_{Rx_j}^2 
           = R^{-2}\Delta_{\Omega},$
%    \end{equation*}
we conclude that 
    \begin{align*}
        \Tr_{L_2(R\Omega)}(\nexp{-t(-\Delta_{R\Omega}+M_{V})}) &\stackrel{\text{\eqref{unitary_trace_conjugation}}}{=} \Tr_{L_2(\Omega)}(U_R\nexp{-t(-\Delta_{R\Omega}+M_{V})}U_R^*)\\
                                                               &= \Tr_{L_2(\Omega)}(\nexp{-t(-R^{-2}\Delta_{\Omega}+M_V)}).
    \end{align*}
%    \begin{align*}
%        \Tr_{L_2(R\Omega)}(\nexp{-t(-\Delta_{R\Omega}+M_{V})}) &= \Tr_{L_2(\Omega)}(U_R\nexp{-t(-\Delta_{R\Omega}+M_{V})}U_R^*)\\
%                                                           &= \Tr_{L_2(\Omega)}(\nexp{-t(-R^{-2}\Delta_{\Omega}+M_V)}).
%    \end{align*}
%    as required.
\end{proof}

\begin{proof}[Proof of Theorem \ref{main_result}]
    By Lemma \ref{rescaling_lemma}, we have
    \begin{equation*}
        \frac{1}{|R\Omega|}\Tr_{L_2(R\Omega)}(\nexp{-t(-\Delta_{R\Omega}+M_V)}) = \frac{1}{|\Omega|}R^{-d}\Tr_{L_2(\Omega)}(\nexp{-t(-R^{-2}\Delta_{\Omega}+M_V)})
    \end{equation*}
    Let $\hbar = R^{-1}$, so that
    \begin{equation*}
        \frac{1}{|R\Omega|}\Tr_{L_2(R\Omega)}(\nexp{-t(-\Delta_{R\Omega}+M_V)}) = \frac{1}{|\Omega|} \hbar^d \Tr_{L_2(\Omega)}(\nexp{-t(-\hbar^2\Delta_{\Omega}+M_V)}).
    \end{equation*}
    According to Theorem \ref{semiclassical_weyl_law}, the limit as $\hbar\to 0$ (equivalently, as $R\to\infty$) exists, and 
    \begin{align*}
        \lim_{R\to\infty} \frac{1}{|R\Omega|}\Tr_{L_2(R\Omega)}(\nexp{-t(-\Delta_{R\Omega}+M_V)}) &= \frac{1}{|\Omega|}\lim_{\hbar\to 0}\hbar^d \Tr_{L_2(\Omega)}(\nexp{-t(-\hbar^2\Delta_{\Omega}+M_V)})\\
                                                                                               &= \frac{1}{|\Omega|}(4\pi t)^{-\frac{d}{2}}\int_{\Omega} \nexp{-tV(x)}\,dx.
    \end{align*}
    Hence in this case the limit in \eqref{F: thermo lim = DOS} exists, and we deduce the existence of a density of states measure $\nu_{V,\Omega}.$ The above computation yields the formula
    \begin{equation*}
        \int_{\Rl} \nexp{-t\lambda}\,d\nu_{V,\Omega}(\lambda) = \frac{1}{|\Omega|}(4\pi t)^{-\frac{d}{2}}\int_{\Omega} \nexp{-tV(x)}\,dx,\quad t > 0. 
    \end{equation*}
\end{proof}

\begin{theorem} %Let $V$ and $\Omega$ be as in Theorem \ref{main_result}.
There exist radially homogeneous potentials $V\in L_{\infty}(\Rl^d)$ for which 
%        \begin{equation*}
          $  \nu_{V,[-1,1]^d} \neq \nu_{V,B(0,1)},  $
%        \end{equation*}
        where $B(0,1)$ is the open ball of radius $1$ in $\Rl^d.$ 
\end{theorem}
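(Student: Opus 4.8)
The plan is to read off both densities of states from Theorem \ref{main_result} and then to separate the two sides by a single, explicit radially homogeneous potential. Since the ball $B(0,1)$ and the open cube $(-1,1)^d$ (whose topological boundary is Lebesgue-null, so it may replace $[-1,1]^d$ without affecting any of the integrals or the density of states) are bounded open sets containing the origin with piecewise smooth boundary, Theorem \ref{main_result} applies to both and gives
\[
    \int_{\Rl}\nexp{-t\lambda}\,d\nu_{V,\Omega}(\lambda)=(4\pi t)^{-\frac d2}\,\frac{1}{|\Omega|}\int_\Omega \nexp{-tV(x)}\,dx .
\]
Because a finite Borel measure on $\Rl$ is determined by its Laplace transform, and because the prefactor $(4\pi t)^{-d/2}$ is common to both domains, it suffices to produce a radially homogeneous $V\in L_\infty(\Rl^d)$ and a single $t>0$ for which
\[
    \frac{1}{|B(0,1)|}\int_{B(0,1)}\nexp{-tV(x)}\,dx \;\neq\; \frac{1}{|[-1,1]^d|}\int_{[-1,1]^d}\nexp{-tV(x)}\,dx .
\]

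First I would pass to polar coordinates. A radially homogeneous potential is determined by its restriction $g:=V|_{\Sb^{d-1}}$ to the unit sphere through $V(r\omega)=g(\omega)$. Both domains are star-shaped about the origin, with radial functions $\rho_{B}(\omega)\equiv 1$ and $\rho_{\square}(\omega)=\|\omega\|_\infty^{-1}$, where $\|\cdot\|_\infty$ denotes the sup-norm. Integrating out the radial variable, each normalized integral becomes a weighted average over the sphere,
\[
    \frac{1}{|\Omega|}\int_\Omega \nexp{-tV(x)}\,dx
      = \frac{\int_{\Sb^{d-1}}\nexp{-tg(\omega)}\,\rho_\Omega(\omega)^d\,d\omega}{\int_{\Sb^{d-1}}\rho_\Omega(\omega)^d\,d\omega}
      = \int_{\Sb^{d-1}}\nexp{-tg}\,d\mu_\Omega ,
\]
the probability measure $\mu_\Omega$ having density proportional to $\rho_\Omega^{\,d}$. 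For the ball this is the uniform measure $\mu_B$, while for the cube it is $\mu_\square$ with the non-constant, bounded weight $w(\omega):=\|\omega\|_\infty^{-d}$ (bounded since $\|\omega\|_\infty\in[d^{-1/2},1]$ on $\Sb^{d-1}$). Thus the whole question reduces to distinguishing the two probability measures $\mu_B\neq\mu_\square$ through the averages of $\nexp{-tg}$.

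Next I would simply take $g=w$, so that $V(r\omega)=\|\omega\|_\infty^{-d}$, i.e. $V(x)=|x|^d\|x\|_\infty^{-d}$, which is a bounded (hence $L_\infty$) radially homogeneous potential. If the two averages agreed for \emph{all} $t>0$, then differentiating at $t=0^+$ — legitimate by dominated convergence since $g$ is bounded — would force the equality of first moments $\int g\,d\mu_B=\int g\,d\mu_\square$, that is,
\[
    \frac{\int_{\Sb^{d-1}} w\,d\omega}{\int_{\Sb^{d-1}} d\omega}
       = \frac{\int_{\Sb^{d-1}} w^2\,d\omega}{\int_{\Sb^{d-1}} w\,d\omega}.
\]
The Cauchy--Schwarz inequality gives $\big(\int_{\Sb^{d-1}} w\,d\omega\big)^2\le |\Sb^{d-1}|\int_{\Sb^{d-1}} w^2\,d\omega$, with strict inequality precisely because $w$ is non-constant; hence the right-hand average strictly exceeds the left one, the first moments differ, and the two Laplace transforms cannot coincide for small $t>0$. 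This yields $\nu_{V,B(0,1)}\neq\nu_{V,[-1,1]^d}$.

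The only genuinely delicate point is that one must compare $\nexp{-tV}$, not $V$ itself; the infinitesimal expansion as $t\to0^+$ converts this into a comparison of first moments, where the strict Cauchy--Schwarz inequality for the non-constant weight $w$ settles the matter at once. Everything else — verifying that $V(x)=|x|^d\|x\|_\infty^{-d}$ lies in $L_\infty(\Rl^d)$ and that both domains satisfy the hypotheses of Theorem \ref{main_result} — is routine.
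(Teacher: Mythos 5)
Your proof is correct, and it shares the paper's overall skeleton: both apply Theorem \ref{main_result} to the two domains and then reduce, via the behaviour of the Laplace transform as $t\to 0^+$, to comparing the first moments $\frac{1}{|\Omega|}\int_\Omega V(x)\,dx$ over the cube and the ball. Where you genuinely diverge is in how you separate those two averages. The paper exhibits the concrete potential $V(x)=|x_1x_2|/(|x_1|^2+|x_2|^2)$ and computes both averages explicitly ($\tfrac{1}{2}\log 2$ for the cube versus $\tfrac{1}{\pi}$ for the ball when $d=2$). You instead pass to polar coordinates, recognize that each normalized average is the integral of $\nexp{-tg}$, $g=V|_{\Sb^{d-1}}$, against a probability measure on the sphere with density proportional to $\rho_\Omega^d$, and then choose the potential to be exactly the cube's weight $w(\omega)=\|\omega\|_\infty^{-d}$, so that strict Cauchy--Schwarz (strict because $w$ is non-constant when $d\ge 2$) forces the two first moments apart with no explicit integration at all. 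Your route buys generality: it shows that whenever a star-shaped domain induces a sphere measure different from the uniform one --- i.e.\ whenever its radial function is non-constant --- some radially homogeneous $V$ distinguishes $\nu_{V,\Omega}$ from $\nu_{V,B(0,1)}$, uniformly in dimension. The paper's route buys a completely explicit, self-contained two-line example with concrete numbers.

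One cosmetic slip: the density of states is not a finite measure (its Laplace transform blows up like $t^{-d/2}$ as $t\to 0^+$), so your appeal to ``a finite Borel measure on $\Rl$ is determined by its Laplace transform'' is misstated. Fortunately you only ever use the trivial direction --- equal measures have equal Laplace transforms, so distinct transforms force distinct measures --- so nothing in the argument is affected.
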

\begin{proof} %We consider the case $d=2,$ -- other dimensions are treated similarly. 
    It follows from Theorem \ref{main_result} that for any bounded open set $\Omega$ containing zero and with piecewise smooth boundary we have
    \begin{equation*}
        \int_{-\infty}^\infty \nexp{-t\lambda} \,d\nu_{V,\Omega}(\lambda) = (4\pi t)^{-\frac{d}{2}}  \left(1+\frac{t}{|\Omega|}\int_{\Omega} V(x)\,dx+O(t^2)  \right),\quad t\to 0.
    \end{equation*}
    Hence, to prove the claim it suffices to give an example of a radially homogeneous potential $V \in L_{\infty}(\Rl^d)$ such that
    \begin{equation*}
        \frac{1}{|[-1,1]^d|}\int_{[-1,1]^d} V(x)\,dx \neq \frac{1}{|B(0,1)|} \int_{B(0,1)} V(x)\,dx.
    \end{equation*}
    Such an example is given by the function
    \begin{equation*}
       V(x) = \frac{|x_1x_2|}{|x_1|^2+|x_2|^2},\quad x = (x_1,\ldots,x_d) \in \Rl^d.
    \end{equation*}
    Indeed, for $d=2$ we may compute
    \begin{equation*}
        \frac{1}{|[-1,1]^2|}\int_{[-1,1]^2} V(x)\,dx = \int_0^1\int_0^1 \frac{x_1x_2}{x_1^2+x_2^2}\,dx_1dx_2 = \frac{1}{2}\log(2)
    \end{equation*}
    and
    \begin{equation*}
        \frac{1}{|B(0,1)|}\int_{B(0,1)} V(x)\,dx = \frac{4}{\pi} \int_0^{\frac{\pi}{2}}\int_0^1 r\cos(\theta)\sin(\theta)\, drd\theta = \frac{1}{\pi}.
    \end{equation*}
    For $d>2$ the computation is identical.
\end{proof}
% The proof shows that what makes the measures $\nu_{V,\Omega}$ different is the value of the number 
% $$
%     \frac 1{|\partial \Omega|} \int _{\partial \Omega} V(\sigma)\,d\sigma.
% $$

\end{document}